
\documentclass[runningheads]{llncs}

\usepackage{graphicx}
\usepackage{amsfonts}
\usepackage{amssymb}
\usepackage{algorithm}
\usepackage{algorithmic}
\usepackage{float}
\usepackage{enumerate}
\usepackage{pgfplots, tikz}
\usepackage{amsmath}
\usepackage{mathtools}
\DeclarePairedDelimiter{\floor}{\lfloor}{\rfloor}

\newlength\myindent
\setlength\myindent{2em}

\begin{document}

\title{An Improved Exact Algorithm for the Exact Satisfiability Problem}

\author{Gordon Hoi}

\titlerunning{An Improved Exact Algorithm for the Exact Satisfiability Problem}
\authorrunning{G.~Hoi}

\institute{School of Computing, National University of Singapore,
13 Computing Drive, Block COM1, Singapore 117417, Republic of Singapore \\
\email{e0013185@u.nus.edu}}

\maketitle        

\begin{abstract}
\noindent
The Exact Satisfiability problem, XSAT, is defined as the problem of finding a satisfying assignment to
a formula $\varphi$ in CNF such that exactly one literal in each clause is assigned to be  ``1" and 
the other literals in the same clause are set to ``0". Since it is an important variant of the satisfiability problem,
XSAT has also been studied heavily and has seen numerous improvements 
to the development of its exact algorithms over the years.

The fastest known exact algorithm to solve XSAT runs in $O(1.1730^n)$ time, where $n$ is
the number of variables in the formula. In this paper, 
we propose a faster exact algorithm that solves the problem in $O(1.1674^n)$ time.
Like many of the authors working on this problem, we give a DPLL algorithm to solve it. The novelty
of this paper lies on the design of the nonstandard measure, to help us to tighten the analysis
of the algorithm further.

\medskip
\noindent 
{\bf Keywords:}  XSAT; Measure and Conquer; Exponential Time Algorithms.
\end{abstract}

\section{Introduction}
\noindent 
Given a propositional formula $\varphi$ in conjunctive normal form (CNF), a common question to ask would be if
there is a satisfying assignment to $\varphi$. This is known as the satisfiability problem, or SAT. SAT is seen
to be a problem that is at the center of computational complexity because it has been commonly used as a framework
to solve other combinatorial problems. In addition, SAT has found many uses in practice as well. Some of these 
examples include : AI-planning, software model checking, etc\cite{Silva08}.

Because of its importance, many other variants of the satisfiability problem have also been explored. 
One such important variant is the Exact Satisfiability problem, XSAT,
where it asks if one can find a satisfying assignment such that exactly one of the literal in each clause is assigned 
the value ``1" and all other literals in the same clause are assigned ``0". All the mentioned problems, 
SAT and XSAT, are both known to be NP-complete \cite{Sch78,Cook71}.

In this paper, we will focus on the XSAT problem and in particular, exact algorithms to solve it. 
XSAT is a well-studied problem and has seen numerous improvements \cite{SS81,MSV81,BMS05,D06} to it, with the
fastest solving it in $O(1.1730^n)$ time. 

In this paper, we will propose an algorithm to solve XSAT in $O(1.1674^n)$ time, using polynomial space. 
Like most of the earlier authors, we will design a Davis-Putnam-Logemann-Loveland 
(DPLL) \cite{DPLL60} style algorithm to solve this problem. We build our work upon the works of the earlier authors. 
While the earlier authors all used the standard measure, which is the number of variables $n$, we propose the
use of a nonstandard measure to help us to tighten the analysis of the algorithm further. 

\section{Preliminaries}

In this section, we will introduce some definitions and also the techniques needed
to understand the analysis of DPLL algorithm. 

\subsection{Branching factor and vector} 
Our algorithm is a DPLL style algorithm, or also known as a branch and bound algorithm. 
DPLL algorithms are recursive in nature and have two kinds of rules associated with them : Simplification and Branching rules.
Simplification rules help us to simplify a problem instance or to act as a case to terminate the algorithm.
Branching rules on the other hand, help us to solve a problem instance by recursively solving smaller instances of the problem. 
To help us to better understand the execution of a DPLL algorithm, the notion of a search tree is commonly used. 
We can assign the root node of the search tree to be the original problem, while subsequent child nodes are assigned 
to be the smaller instances of the problem whenever we invoke a branching rule. For more information of this area,
one may refer to the textbook written by Fomin and Kratsch \cite{FK10}. 

Let $\mu$ be our measure of complexity. To analyse the running time of DPLL algorithms, one just needs to bound
the number of leaves generated in the search tree. This is because the complexity of such
algorithm is proportional to the number of leaves, modulo polynomial factors, that is, 
$O(poly(|\varphi|,\mu) \times \text{number of leaves in the search tree})=O^*(\text{number of leaves in the search tree})$,
where the function $poly(|\varphi|,\mu)$ is some polynomial dependent on $|\varphi|$ and $\mu$, and $O^*(f(\mu))$ is the
class of all function $g$ bounded by some polynomial $p(\cdot)\times f(\mu)$. 

Then we let $T(\mu)$ denote the maximum number of leaf nodes generated by the algorithm when we have $\mu$ as
the parameter for the input problem.
Since the search tree is only generated by applying a branching rule, 
it suffices to consider the number of leaf nodes generated by that rule (as simplification rules take only polynomial time).
To do this, we use techniques in \cite{Kul99}. Suppose a branching rule has $r \geq 2$ children, with
$t_1,t_2 ,\ldots,t_r$ decrease in measure for these children.
Then, any function $T(\mu)$ which satisfies $T(\mu) \geq T(\mu-t_1) + T(\mu-t_2) + \ldots T(\mu-t_r)$, with appropriate
base cases, would satisfy the bounds for the branching rule. To solve the above linear recurrence, 
one can model this as $x^{-t_1} + x^{-t_2} + \ldots + x^{-t_r} = 1$. Let $\beta$ be the unique positive root
of this recurrence, where $\beta \geq 1$. Then any $T(\mu) \geq \beta^\mu$ would satisfy the recurrence for
this branching rule. In addition, we denote the branching factor $\tau(t_1,t_2,\ldots,t_r)$
as $\beta$. 
If there are $k$ branching rules in the DPLL algorithm, then the overall
complexity of the algorithm is the largest branching factor among all $k$ branching rules;
i.e. $c=max\{\beta_1,\beta_2,\ldots,\beta_k\}$, and therefore the time complexity of the
algorithm is bounded above by $O^*(c^\mu)$.

Next, we will introduce some known results about branching factors. If $k < k'$, then we have that
$\tau(k',j) < \tau(k,j)$, for all positive $k,j$. In other words, comparing two branching factor,
if one eliminates more weights, then this will result in a a smaller branching factor. Suppose that 
$i+j = 2\alpha$, for some $\alpha$, then $\tau(\alpha,\alpha) \leq \tau(i,j)$. In other words, 
a more balanced tree will result in a smaller branching factor. 

Finally, the correctness of DPLL algorithms usually follows from the fact that all cases have been covered. 

\subsection{Definitions}

\begin{definition}
A clause is a disjunction of literals. We also say that a clause is a multiset of literals. 
A $k$-literal clause is a clause $C$ with $|C|=k$. Let $C$ be a clause, then $\delta$ is a subclause
of $C$ if $\delta \subset C$.
\end{definition}

Suppose we have $C=(a \vee b \vee c \vee d)$, then $C$ is a $4$-literal clause. In addition, 
$\delta=(a \vee b \vee c)$ is a subclause of $C$. We may also write $C = (\delta \vee d)$. For now,
we define a clause as a multiset of literals as the same literal may appear twice in a clause. When no simplification rules
\footnote{More details later in Section 3, when the algorithm is given} 
can be applied, we may then think of a clause as a set of literals instead. 

\begin{definition}
Two clauses are called neighbours if they share at least a common variable. Two variables are called neighbours
if they appear in some clause together. Let $C_1$ and $C_2$ be two clauses that are neighbours.
Now if $|C_1\cap C_2|=k \geq2$, we say that $C_1$ and $C_2$ have $k$ overlapping variables. In addition, 
the variables in $C_1-C_2$ and $C_2-C_1$ are known as outside variables. Let $|C_1-C_2|=i$ and $|C_2-C_1|=j$,
$i,j\geq1$. Then we say that there are $i+j$ outside variables, in an $i$-$j$ orientation.  
\end{definition}

Note that this definition ($i$-$j$ orientation) is strictly used for the case when we have 
$k\geq2$ overlapping variables between any two clauses \footnote{Mainly in Section 4.3}.
 We only consider $i,j\geq1$ because if $i$ or $j$ is $0$, then
one of the clause must be a subclause of the other. Consider the following example.

\begin{example} \label{example_overlap}
Let $C_1 = (a \vee b \vee c \vee d \vee e)$ and $C_2=(d \vee e \vee f \vee g \vee h)$. Then in this case,
since $C_1 \cap C_2 = \{d,e\}$, there are 2 literals in the intersection and we say that $C_1$ and $C_2$ have
$2$ overlapping variables. In addition, $C_1 - C_2 = \{a,b,c\}$ and $C_2 - C_1 = \{f,g,h\}$. Now, we say
$C_1$ and $C_2$ have 6 outside variables in a $3$-$3$ orientation. 
\end{example} 

\begin{definition}
Let $x$ be a literal. Now the degree of a variable, $deg(x)$, denotes the total number of times that the literal $x$
and $\neg x$ appears in $\varphi$. If $deg(x)\geq3$, then we say that the variable $x$ is heavy . 
Further, for a heavy variable $x$ that appears in clauses $C_1,C_2,...,C_k$, $k\geq3$, we say that $x$ is in
$(l_1,l_2,...,l_k)$, where $|C_i|=l_i$, $1\leq i \leq k$. Adding on to this, 
\begin{enumerate}
\item if $\neg x$ appears in $C_i$, then we say $x$ is in $(l_1,l_2,...,\neg l_i,...,l_k)$.
\item if $|C_i|\geq l_i$, then we say $x$ is in $(l_1,l_2,...,\geq l_i,...,l_k)$.
\end{enumerate}
\end{definition}

Note that if $x$ is a heavy variable, we will only use this definition that $x$ is in $(l_1,l_2,...,l_k)$, whenever given any two
clauses that $x$ is in, they have at most 1 overlapping variable between them. 

\begin{example}
Suppose we have the following clauses : $(x \vee a \vee b \vee c \vee d)$, $(\neg x \vee e \vee f \vee g)$
, $(x \vee h \vee i \vee j \vee k)$. Then in this case, we have $x$ in $(5,\neg4,5)$.
We can also say that $x$ is in $(\geq4,\neg4,5)$ and we use ``$\geq i$" whenever we just need to know that the clause
length is at least $i$. Note that the order in which the clause length is presented here does not matter, i.e. 
 $(5,\neg4,5)$ can also be written as $(\neg 4,5,5)$.
\end{example}

\begin{definition}
We say that two variables, $x$ and $y$, are linked when we can deduce either $x=y$ or $x=\neg y$. When this happens,
we can proceed to remove one of the linked variable, either $x$ or $y$, and replace by the other.
\end{definition}

Suppose we have a 3-literal clause $(0 \vee x \vee y)$, by definition of being exact satisfiable, we can deduce that 
$x=\neg y$ in this case, and proceed to remove one variable, say $x$, by replacing 
all instances of $x$ by $\neg y$ and $\neg x$ by $y$ respectively. 

\begin{definition} \label{def_branching}
Given a formula $\varphi$ and $\delta$ a multiset of literals. 
\begin{enumerate}
\item If $|\delta|=1$, then let $x$ be the only literal in $\delta$. Now $\varphi[x=1]$ and $\varphi[x=0]$ denotes
the new formula obtained after assigning $x=1$ and $x=0$ respectively.
\item If $|\delta|\geq2$, then we only allow the following when $\delta \subset C$, for some clause $C$ in $\varphi$. 
$\varphi[\delta=1]$ denotes the new formula obtained after assigning all the $C-\delta$ to be 0. By definition of being 
exact-satisfiable, this is saying that the ``1" must only appear in one of the literals in $\delta$. Therefore, all the 
literals in $C-\delta$ are assigned 0. On the other hand, $\varphi[\delta=0]$ denotes the new formula obtained 
after assigning all the literals in $\delta$ to be 0.
\end{enumerate}
Similarly, given two literals $x$ and $y$, we say that $\varphi[x=y]$ is the new formula obtained by replacing all
occurrences of $x$ by $y$. 
\end{definition}

\begin{example}
Suppose $\varphi=(a \vee b \vee c \vee d)$ and $\delta=(a \vee b \vee c)$. Then
$\varphi[\delta=1]=(a \vee b \vee c \vee 0)$ since we are saying that the ``1" appears in either $a$, $b$, or $c$.
On the other hand, $\varphi[\delta=0]=(0 \vee 0 \vee 0 \vee d)$.
\end{example}

Definition \ref{def_branching}.1 is used whenever we are branching a variable. 
On the other hand, Definition \ref{def_branching}.2 is used when we 
want to branch a subclause, especially when we deal with $k\geq2$ 
overlapping variables between two clauses. 
In addition, when we have a subclause 
$\delta$ such that $|\delta|=2$, then let $x$ and $y$ be the literals in $\delta$. Saying that 
$\varphi[\delta=1]$ is the same as saying $\varphi[x=\neg y]$, linking $x=\neg y$.

A common technique used by the earlier authors is known as resolution. If there are clauses 
$C_1 = (C \vee x)$ and $C_2 = (C' \vee \neg x)$, where $x$ is a literal, $C$ and $C'$ 
are subclauses of $C_1$ and $C_2$ respectively, then we can replace every clause
$(x \vee \alpha)$ by $(C' \vee \alpha)$, and every clause $(\neg x \vee \beta)$ by
$(C \vee \beta)$, for some subclause $\alpha,\beta$. In addition,
every literal in $C \cap C'$ can be assigned 0. This can help us to remove
literals appearing as $x$ and $\neg x$ in different clauses.

\subsection{A nonstandard measure}
Instead of using the number of variables as our measure, we will design a 
nonstandard measure to help us to improve the worst case time complexity of our algorithm. 
Let $\{x_1,x_2,...,x_n\}$ be the set of variables in $\varphi$. For $1\leq i \leq n$, 
we define the weight $w_i$ for $x_i$ as :
\[
 w_i = 
\begin{cases}
    0.8823,& \text{if $x_i$ is on a 3-literal clause such that all 3 variables in that }\\
 	    & \text{clause do not have the same neighbour}\\
    1,              & \text{otherwise}
\end{cases}
\]

We then define our choice of measure as $\mu = \sum_i w_i$, where $\mu \leq n$ by definition.
This value of $0.8823$ is chosen by a linear search program 
to bring down the overall runtime of the algorithm to as low as possible. Therefore,
we have $O(c^{\mu}) \subseteq O(c^n)$, for some constant $c\geq1$ by definition.

\begin{example}
Suppose we have the following clauses : $(x \vee y \vee z \vee a),(x \vee u \vee w \vee v),
(x \vee r \vee s \vee t),(a \vee v \vee t)$ and the clause $(y \vee e \vee f)$.
The variables $x,z,u,w,r$ and $s$ have weight 1. By definition, variables $a,v$ and $t$ are
assigned the weight 1 because these variables 
have $x$ as their neighbour. Variables $y,e,f$ have weights 
$0.8823$ because these 3 variables do not have the same neighbour.
\end{example}

\section{Algorithm}

All of our simplification rules and branching rules are designed 
to ensure that the overall measure does not increase after applying them. That is,
the measure before applying any of the rule, $\mu$, and the measure after applying 
any of the rule, $\mu'$, is always $\mu' \leq \mu$. 
We call our DPLL algorithm $XSAT(.)$. Note that if every variable $x$ has $deg(x)\leq2$, 
then we can solve XSAT in polynomial time \cite{MSV81}. With this in mind, we'll design our algorithm
by branching all heavy variables. Note that each line of the algorithm
has decreasing priority; Line 1 has higher priority than Line 2, Line 2 than Line 3 etc.
Let $\alpha, \beta, \delta$ be subclauses.

\noindent
Algorithm : $XSAT$ \\
Input : A formula $\varphi$ \\
Output : 1 if $\varphi$ is exact satisfiable, else 0 \\

\begin{enumerate} 
\item If there is a clause that is not exact-satisfiable, then return 0.
\item If there is a clause $C=(1 \vee \delta)$ or $C=(x \vee \neg x \vee \delta)$, for some variable $x$, 
then set all literals in $\delta$ to 0 and drop the clause $C$. Return $XSAT(\varphi[\delta=0])$.
\item If there exist a clause $C=(0 \vee \delta)$, then update $C=\delta$. Update $\varphi'$ as the new formula and
return $XSAT(\varphi')$.
\item If there exist a 1-literal clause containing the literal $l$, then drop that clause. 
Return $XSAT(\varphi[l=1])$.
\item If there exist a 2-literal clause containing the literal $l$ and $l'$, then drop that 
clause. Return $XSAT(\varphi[l=\neg l'])$.
\item If there exist a clause $C$ with a literal $l$ appearing at least twice, then return $XSAT(\varphi[l=0])$.
\item If there exist clauses of the type $(\alpha \vee x \vee y)$ and $(\beta \vee x \vee \neg y)$, for some
literal $x$ and $y$, then return $XSAT(\varphi[x=0])$.
\item If there exist clauses of the type $(\alpha \vee x \vee y)$ and $(\beta \vee \neg x \vee \neg y)$, then
return $XSAT(\varphi[x=\neg y])$.
\item If there are clauses $C$ and $C'$ such that $C \subset C'$, then set all literals in $\delta=C'-C$ as 0, remove the clause $C'$
and return $XSAT(\varphi[\delta=0])$.
\item If there is a variable $x$ appearing in at least three 3-literal clauses, then we either simplify it or branch $x$.
If we simplify it, let $\varphi'$ be the new formula after simplifying. Return $XSAT(\varphi')$. If we branch $x$,
return $XSAT(\varphi[x=1]) \vee XSAT(\varphi[x=0])$.
\item If there are clauses $C_1$ containing $x$ and $C_2$ containing $\neg x$, for some literal $x$. Then we apply resolution and
let $\varphi'$ be the new formula. Return $XSAT(\varphi')$. 
\item If there are clauses $C_1$ and $C_2$ such that they have $k\geq2$ overlapping variables, then check if the outside variables
are in a $1$-$j$ orientation, $j\geq1$. If yes, then let $\varphi'$ be the new formula after applying some changes 
\footnote{Full details given in the Section \ref{sectionline12}}, then return $XSAT(\varphi')$. Else, let $\delta = C_1 \cap C_2$ 
and we branch the subclause $\delta$. Return $XSAT(\varphi[\delta=1]) \vee XSAT(\varphi[\delta=0])$.
\item If there is a heavy variable $x$, then branch $x$. Return $XSAT(\varphi[x=1]) \vee XSAT(\varphi[x=0])$.
\item If all the variables $x$ have $deg(x)\leq2$, then solve the problem in polynomial time. Return 1 if exact-satisfiable, 
else return 0. 
\end{enumerate}

Lines 1 to 9, 11 are simplification rules, while Lines 10, 12 and 13 are branching rules. Line 14 takes only polynomial time
to decide if there is an exact-satisfiable assignment to $\varphi$ when $deg(x)\leq2$ for all variable $x$. 
Line 1 says that if any clause is found not to be exact-satisfiable, then we can return 0. 
Line 2 says if a clause contains a ``1", then the other literals appearing in the clause must be assigned 0. 
Line 3 says that if we have a clause containing ``0", then we can update that clause by dropping off the constant ``0". 
Line 4 says that if we encounter a 1-literal clause, then that literal must be assigned 1. 
Line 5 says that if there are any 2-literal clause containing some literals $x$ and $y$, then we can just link the
two literals $x=\neg y$ together. After Line 5 of the algorithm, every clause in $\varphi$ must be at least a 3-literal clause. 

Line 6 deals with clauses containing the same literals that appear at least twice. After Line 6, every clause can only contain
any literal at most once. Lines 7 and 8 deals with two clauses that have at least two variables in common, in different permutations. 
After Line 8, if any two clauses have at least two variables in common, then this implies that they have share at least two
literals in common. After Line 9, no clause is a subclause of a larger clause in $\varphi$.

In Line 10, we deal with variables that appears in at least three 3-literal clauses. We deal with this case early on
because it helps us to reduce the number of cases that we need to handle later on while branching in Section 4.3 and 4.4.
In Line 11, we deal with clauses $C_1$ containing the literal $x$ and $C_2$ containing $\neg x$. 
Line 12 deals with two clauses having $k\geq2$ overlapping variables. First, we deal with such cases
in a $1$-$j$ orientation, $j\geq1$, followed by such cases in an $i$-$j$ orientation, $i,j\geq2$. 
After which, any two clauses must have only at most one variable in common.
Line 13 deals with heavy variables. After that, no heavy variables exist in the formula $\varphi$
and we can proceed to solve the problem in polynomial time in Line 14. We have therefore covered all cases in our algorithm.

\section{Analysis of Algorithm}

In this section, we will analyze the overall runtime of the algorithm given in the previous section. 
Note that simplification rules only take polynomial time. Therefore, we will
analyse from Lines 10 to 13 of the algorithm. 

Due to the way we design our measure, if a $k$-literal clause drops to a 3-literal clause, $k>3$, 
we can factor in the change of measure 
of $1-0.8823=0.1177$ for each of the variables in the 3-literal clause, if there is no common neighbour. 
Whenever we are dealing with a 3-literal clause, for simplicity, we will treat all the variables in it as having
a weight of $0.8823$ instead of $1$. This gives us an upper bound on the branching factor without the need to
consider all kinds of cases.

In addition, when we are dealing with 3-literal clause, sometimes we have to increase the measure after linking.
For example, suppose we have the clause $(0 \vee x \vee y)$, for some literals $x$ and $y$. Now we can link
$x =\neg y$ and proceed to remove one variable, say $x$. This means that the 3-literal clause is removed and the
surviving variable $y$, may no longer be appearing in any other 3-literal clause. Therefore, the weight of $y$ increases
from $0.8823$ to $1$. This increase in weight means that we increase our measure and therefore, we have to factor in
``-0.1177" whenever we are linking variables in a 3-literal clause together.

\subsection{Line 10 of the Algorithm} \label{sectionline10}

Line 10 of the algorithm deals with a variable appearing in at least three 3-literal clauses. We can either simplify the case further, or
branch $x$. At this point in time, Lines 11 and 12 of the algorithm has not been called. 
This means that we have to deal with literals appearing as $x$ and $\neg x$, and that given any two clause,
it is possible that they have $k\geq2$ overlapping variables. 

\begin{lemma}
If $x$ appears in at least three 3-literal clauses, we either simplify this case further or we branch $x$, 
incurring at most $O(1.1664^n)$ time.
\end{lemma}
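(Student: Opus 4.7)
The plan is to branch on $x$ via $XSAT(\varphi[x=1]) \vee XSAT(\varphi[x=0])$ and bound the resulting branching factor case by case, identifying a handful of structural configurations in which a cheaper simplification replaces the branch.

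First I would fix notation. Let $x$ appear in three 3-literal clauses $C_1, C_2, C_3$, and observe that since Lines 11 and 12 have not yet fired, these clauses may have $k \geq 2$ overlapping variables and may even contain both $x$ and $\neg x$ between them. I would partition the configurations by: the sign pattern of $x$ across the three clauses; whether any two $C_i, C_j$ share further variables beyond $x$; and whether the three clauses pool a common non-$x$ neighbour. The \emph{simplify further} outcome in the statement would correspond to those configurations in which this pairwise overlap, or a concealed resolution, lets us remove $x$ or link two outside variables in polynomial time, so no branching is actually needed.

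For the remaining, genuinely branching configurations I would compute $(t_1, t_2)$, the weight decreases in the two branches. In the branch $x=1$, every other literal of $C_1, C_2, C_3$ is assigned $0$, deleting up to seven variables, each carrying weight $0.8823$ in the cleanest subcase (three disjoint clauses with no shared neighbour), for $t_1 \approx 7 \cdot 0.8823 \approx 6.18$. In the branch $x=0$, each $C_i$ drops to a 2-literal clause and Line 5 links the two surviving literals, removing one variable per clause plus $x$ itself; this gives $t_2 \approx 4 \cdot 0.8823 \approx 3.53$, before applying the $-0.1177$ corrections mandated in Section 4 whenever a surviving linked variable is no longer in any 3-literal clause. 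Analogous, slightly smaller decreases cover the subcases where some outside variables coincide between the three clauses.

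Finally I would substitute these decreases into $\tau(t_1, t_2)$ from Section 2.1 and verify $\tau(t_1, t_2) \leq 1.1664$ in every remaining subcase. The main obstacle will be the bookkeeping of the $-0.1177$ corrections: the measure can rise as surviving variables leave 3-literal clauses, and the sign-pattern and overlap subcases multiply quickly, so I would need to argue either that each adverse subcase is pre-empted by one of Lines 1--9 or by the simplification alternative, or that the extra outside variables released in the $x=1$ branch give enough slack to absorb the corrections without pushing $\tau(t_1,t_2)$ past the target.
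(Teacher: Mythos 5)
Your proposal follows essentially the same route as the paper: dispose of the overlapping-clause configurations by linking/simplification, then branch $x=1$ versus $x=0$ on the remaining sign patterns, with the worst case being the all-positive $(3,3,3)$ configuration giving $\tau(7\times0.8823,\,4\times0.8823-3\times0.1177)=1.1664$. Your weight decreases and the $-0.1177$ linking corrections match the paper's computation, so the approach and the resulting bound coincide.
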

\begin{proof}
Now let $x$ be appearing in two 3-literal clauses. We first deal with the case that that for any two 3-literal clauses,
there are $k\geq2$ overlapping variables. Since simplification rules do not apply anymore, the only case we need
to handle here is $(x \vee y \vee z)$ and $(x \vee y \vee w)$, for some literals $w,y,z$.
 In this case, we can link $w=z$ and drop one of these clauses.

For the remaining cases, $x$ must appear in three 3-literal clause and there are no $k\geq2$ overlapping variables
between any two of the 3-literal clause. Therefore, for the remaining case,
$x$ must be in $(3,3,3)$ or $(3,3,\neg 3)$.

For the $(3,3,3)$ case, let the clauses be $(x \vee v_1 \vee v_2)$, $(x \vee v_3 \vee v_4)$ and 
$(x \vee v_5 \vee v_6)$, where $v_1,...,v_6$ are unique literals. We branch $x=1$ and $x=0$ here.
When $x=1$, we remove the variables $v_1,...,v_6$ and $x$ itself. This gives us a change of measure of
$7\times0.8823$. When $x=0$, we remove $x$, and link $v_1 =\neg v_2$, $v_3 =\neg v_4$ and 
$v_5 =\neg v_6$. This gives us a change of measure of $4\times0.8823-3\times0.1177$. This givs us a branching
factor of $\tau(7\times0.8823,4\times0.8823-3\times0.1177)=1.1664$. The case for $(\neg 3,\neg 3,\neg 3)$
is symmetric.

For the $(3,3,\neg 3)$ case, let the clauses be $(x \vee v_1 \vee v_2)$, $(x \vee v_3 \vee v_4)$ and 
$(\neg x \vee v_5 \vee v_6)$, where $v_1,...,v_6$ are unique literals. Again, we branch $x=1$ and $x=0$.
When $x=1$, we remove $x$ and the variables $v_1,...,v_4$, and link the variables $v_5 =\neg v_6$. This gives us
a change of measure of $6\times0.8823-0.1177$. When $x=0$, we remove $x$, $v_5,v_6$, and link the variables
$v_1 = \neg v_2$ and $v_3 =\neg v_4$. This gives us a change of measure of $5\times0.8823-2\times0.1177$.
This gives us a branching factor of $\tau(6\times0.8823-0.1177,5\times0.8823-2\times0.1177)=1.1605$.
The case for $(3,\neg3,\neg3)$ is symmetric. Therefore, this takes at most $O(1.1664^n)$ time.
\end{proof}

\subsection{Line 11 of the Algorithm} \label{sectionline11}

Line 11 of the algorithm applies resolution. One may note that our measure is designed in terms of the length of the clause.
Therefore, it is possible that the measure may increase from $0.8823$ to $1$ after applying resolution. 
Applying resolution on $k$-literal clauses, $k\geq4$, 
is fine because doing so will not increase the measure. On the other hand, 
applying on 3-literal clauses will increase the length of the clause and hence, increase the weights of the
other variables in the clause, and finally, the overall measure.
Therefore to apply resolution on such cases, we have to ensure that the removal of the variable $x$, is more than the increase of the
weights of from $0.8823$ to $1$ $(1-0.8823=0.1177)$. 
To give an upper bound, we assume that $x$ has weight $0.8823$. Taking $0.8823\div 0.1177 = 7.5$.
Therefore, if there are more than 7.5 variables increasing from $0.8823$ to $1$, then we refrain from doing so. 
This translates to $x$ appearing in at least four 3-literal clauses. 
However, this has already been handled by Line 10 of the algorithm. 
Hence, when we come to Line 11 of the algorithm, we can safely apply resolution.

\subsection{Line 12 of the algorithm} \label{sectionline12}

In this section, we deal with Line 12 of the algorithm. Since simplification rules do not apply anymore when
this line is reached, we may then think of clauses as sets (instead of multiset) of literals, since the same literal
can no longer appear more than once in the clause. In addition, from the previous line of the algorithm,
we know that we will not have $x$ and $\neg x$ appearing in the formula, for any literal $x$. 
Now, we fix the following notation
for the rest of this section. 
Let $C_1$ and $C_2$ be any clauses given such that $C_1 \cap C_2=\delta$, with $|\delta|\geq 2$ overlapping variables, in an
$i$-$j$ orientation, where $|C_1 - C_2|=i$ and $|C_2 - C_1|=j$, where $i,j \geq1$. We divide them into 3 parts, 
let $L = C_1 - C_2$ (left), $R = C_2 - C_1$ (right) and $\delta$ (middle). 
For example, in Example \ref{example_overlap}, we have $L=\{a,b,c\}$ and $R=\{f,g,h\}$. 
We first deal with the cases $i=1$, $j\geq1$.

\begin{lemma}
The time complexity of dealing with two clauses with $k\geq2$ overlapping variables, having $1$-$j$ orientation, $j\geq1$,
is at most $O(1.1664^n)$.
\end{lemma}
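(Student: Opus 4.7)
The plan is to exploit the algebraic identity implied by the two overlapping clauses and then apply a simplification, together with resolution, in each of the two subcases $j=1$ and $j\geq 2$.

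Write $C_1=\{a\}\cup\delta$ and $C_2=\delta\cup R$ with $|\delta|=k\geq 2$ and $|R|=j\geq 1$. The exact-one semantics of $C_1$ and $C_2$ give
$$a+\sum_{v\in\delta}v=1=\sum_{v\in\delta}v+\sum_{r\in R}r,$$
and subtraction yields the key identity $a=\sum_{r\in R}r$. This identity is the source of every simplification below.

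First I would handle $j=1$. Writing $R=\{b\}$, the identity reduces to $a=b$, so I apply the linking simplification of Definition 5, replacing every occurrence of $b$ by $a$. Then $C_1$ and $C_2$ become identical and Line 9 drops the duplicate. The step is polynomial and strictly reduces the number of variables; the measure bookkeeping is straightforward because, when $k=2$, the surviving copy of $C_1$ is still a 3-literal clause containing the $\delta$-variables, so their weights remain at $0.8823$, while the removed variable $b$ contributes a weight decrease of at least $0.8823$.

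Next, for $j\geq 2$, I would replace $C_2$ by the shorter clause $C_2'=\{\neg a\}\cup R$ of length $j+1$, whose exact-one constraint is exactly $\sum_{r\in R}r=a$. The replacement creates both $a$ and $\neg a$ in the formula, so the next recursive call triggers Line 11 (resolution on $a$): $C_1=(a\vee\delta)$ and $C_2'=(\neg a\vee R)$ merge into the single clause $\delta\cup R=C_2$, while the variable $a$ is eliminated. The net effect of these two consecutive steps is the deletion of the variable $a$ and the clause $C_1$, with $C_2$ preserved. For $k\geq 3$, $C_1$ was not 3-literal, so $\delta$-variables keep their weights and the removal of $a$ drops the measure by at least $0.8823$. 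For $k=2$, $C_1$ was 3-literal and the two $\delta$-variables may lose their only 3-literal clause, incurring a weight bump of at most $2(1-0.8823)=0.2354$; but $a$ itself had weight $0.8823$, so the net change is at most $-0.8823+0.2354=-0.6469$, still strictly negative.

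Because every subcase is a polynomial-time simplification that strictly decreases the measure, the time contribution of this line is absorbed into the overall bound set by the Line 10 branching of the preceding lemma, namely $O(1.1664^n)$. The main technical obstacle I expect is the measure accounting in the $k=2$ regime, where removing $C_1$ may destroy the only 3-literal membership of the $\delta$-variables and thus bump their weights from $0.8823$ to $1$; the calculation above shows that the deletion of $a$ nonetheless dominates those weight bumps and keeps the measure monotone, but a careful case split (also checking whether $\delta$-variables lie in some other 3-literal clause, and whether linking in the $j=1$ branch forces a $-0.1177$ correction) will be needed to make the bookkeeping airtight.
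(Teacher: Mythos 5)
Your proposal matches the paper's proof essentially step for step: for $j=1$ you link the two singleton outside variables, and for $j\geq 2$ you use the exact-one identity to rewrite $C_2$ as $(\neg a\vee R)$ so that Line 11 (resolution, or a fallback to the Line 10 branching) eliminates $a$, with the $O(1.1664^n)$ bound inherited from the preceding lemma. Your additional measure bookkeeping in the $k=2$ regime is more detailed than what the paper records, but the underlying argument is the same.
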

\begin{proof}
If $j=1$, then let $x \in L$ and $y \in R$. Then we can just link $x=y$
and this case is done. If $j \geq2$, then let $C_1 = (x \vee  \delta)$ and $C_2 = (\delta \vee R)$. 
From $C_1$, we know that $\neg x = \delta$. Therefore, $C_2$ can be rewritten has $(\neg x \vee R)$. With the
clauses $C_1 = (x \vee \delta)$ and $C_2 = (\neg x \vee R)$, we can apply Line 11 of the algorithm 
which either uses resolution to remove the literals $x$ and $\neg x$, or to apply branching to get a complexity of
$O(1.1664^n)$.   
\end{proof}

Now, we deal with the case of having $k\geq2$ overlapping variables in an $i$-$j$ orientation, $i,j\geq2$.
Note that during the course of branching $\delta=0$, when a longer clause drops to a 3-literal clause $L$ (or $R$), then we
can factor in the change of measure of $1-0.8823=0.1177$ for each of the variable in $L$ (Normal Case). 
However, there are situations when we are not allowed to factor in this change.
Firstly, when there is a common neighbour to the variables in $L$ (Case 1). Secondly, when some or all 
variables in $L$ already have weights $0.8823$, which means the variable appears in further 
3-literal clauses prior to the branching (Case 2).

Instead of enumerating every single case, we show that some cases can be avoided by upper bounding them
from a different case. We first show how to deal with Case 1. \\

\noindent\textbf{Case 1:} The variables in $L$, with $|L|=3$,  (similarly for $R$) have a common neighbour.
\begin{itemize}
\item When there is a clause $L'$, such that $L \subset L'$ and therefore every variable in $L'-L$ is a neighbour to $C$.
However, if this case happens. Then by our simplification rule, we can set the literals in $L'-L$ to 0. We can
remove at least one such variable, and the weight of such a variable is at least 0.8823.
\item Let the literals in $L$ be $a,b,c$, $\alpha,\beta,\gamma$ be subclauses.
\begin{enumerate}
\item $(s \vee \beta \vee a \vee b)$ and $(s \vee \alpha \vee c)$
\item $(s \vee \alpha \vee a)$, $(s \vee \beta \vee b)$ and $(s \vee \gamma \vee c)$ 
\end{enumerate}
Then in the above 2 cases, $s=0$ and the weight of $s$ is at least $0.8823$
\end{itemize}

In all 3 possible cases in Case 1, we are able to factor in an additional measure of $0.8823$. Now let $\Delta\mu_{\delta=1}$ 
be the change of measure when we branch $\delta=1$ and $\Delta\mu_{\delta=0}$ when we branch $\delta=0$ 
for the Normal Case. Note that when $\delta=0$, we remove all the variables in $\delta$ and we can also factor in the change of measure
for at most 6 variables (in a $3$-$3$ orientation). In Case 1, we can remove an additional variable that has weight at least $0.8823$,
which means $0.8823-6\times0.1177=0.1761$, allowing us to factor in additional change of measure of $0.1761$ in the worst case.
Therefore, we have $\tau(\Delta\mu_{\delta=1},\Delta\mu_{\delta=0}+0.1761) < \tau(\Delta\mu_{\delta=1},\Delta\mu_{\delta=0})$,
being upper bounded by the branching factor in the Normal Case. Hence, it suffices to just show the Normal Case.

For Case 2, we pay special attention to the outside variables in an $i$-$j$ orientation, $i\leq3$ or $j\leq3$. This is because when $i,j\geq4$,
and while branching $\delta=0$, we can only remove the variables in $\delta$ 
and not factor in other changes in measure from the variables in $L$ or $R$. 
On the other hand, when $\delta=1$, we can remove additional variables not in $L$, $R$ and $\delta$, 
whenever we have a variable having weight $0.8823$. Let $s$ be a variable not appearing in 
$L$, $R$ or $\delta$. We show all the possibilities below. 

\noindent\textbf{Case 2 :} The variables in $L$ (or $R$) appear in further 3-literal clauses.
\begin{enumerate}
\item Case 2.1 : A pair of 3-literal clauses containing $s$, with the neighbours of $s$ appearing in $L$ and $R$. 
For example, if we have $(l_1 \vee l_2 \vee  \delta)$ and $(\delta \vee r_2 \vee r_1)$,
and the two 3-literal clauses $(s \vee l_1 \vee r_1)$ and $(s \vee l_2 \vee r_2)$. 

In such a case, we branch $s=1$ and $s=0$.
Now when we branch $s=1$, we remove at least $s,l_1,r_1,l_2,r_2$. When $s=0$, we link $l_1=\neg r_1$ and $l_2 = \neg r_2$.
Then, the new clauses will be $(\neg r_1, \neg r_2 \vee \delta)$ and $(\delta \vee r_2 \vee r_1)$. Then, by our simplification rule,
we must have that $\neg r_1 = r_2$, and we can remove $\delta$. To upper bound this branching factor, we treat all the variables
as having weight $0.8823$. This gives us a branching factor of $\tau(5\times0.8823,(4+|\delta|)\times0.8823)$.
Since $|\delta|\geq2$, our branching factor is bounded above by $1.1541$.

\item Case 2.2 : Not Case 2.1. In other words, there is no such $s$ that appears in two 3-literal clauses, where the neighbours of $s$
are the variables in $L$ and $R$. In this case, we have 3-literal clauses, each containing a variable from $L$, a variable from $R$,
and another variable not from $L$, $R$, and $\delta$. 
\end{enumerate}

By Line 10 of the algorithm, $s$ cannot appear in a third 3-literal clause. Therefore, we must either have Case 2.1 or Case 2.2.

For Case 2.1, we have shown that if such a case arises, then $1.1541$ acts as an upper bound for all such cases of $k\geq2$ 
overlapping variables in an $i$-$j$ orientation, $i,j\geq2$. Therefore, in the Lemma below, we will not deal with such cases.

Case 2.2 arises when it is not Case 2.1; when there is no such $s$, appearing in two 3-literal clauses, with
the neighbours of $s$ appearing in $L$ and $R$. Case 2.2 represents the case where we can have $(l \vee s \vee r)$,
where $l \in L$, $r \in R$ ($s$ only appears in exactly one 3-literal case in Case 2.2). 

Note that, apart from such a scenario in Case 2.2, we can of course have
a variable appearing in a further 3-literal clause, containing a variable from $L$ or $R$, and then containing two
variables not from $L$, $R$ and $\delta$ (Standalone 3-literal). For example, 
we have $C_1 = (a \vee b \vee c \vee \delta)$ and $C_2 = (\delta \vee d \vee e \vee f)$. 
So Case 2.2 has 3-literal clauses like $(c \vee s \vee d)$. However,
we can also have Standalone 3-literal clauses like $(f \vee g \vee h)$, where $g,h$ does not appear in $L$, $R$
and $\delta$. 

We can show that Case 2.2 upper bounds the case of having Standalone 3-literal. 
Given any case of $k\geq2$ overlapping variables, in an $i$-$j$ orientation,
$i,j\geq2$, let our clauses be $(\alpha \vee x \vee \delta)$ and $(\delta \vee y \vee \beta)$, for some subclause
$\alpha,\beta$ and $\delta$. We will now compare Case 2.2 with the case of having Standalone 3-literal clauses.
We can have two Standalone 3-literal clause, on the variables $x$ and $y$, or we can have only one Standalone 
3-literal clause, on either $x$ or $y$. Now let $\Delta\mu_{\delta=1}$ ($\Delta\mu_{\delta=0}$) 
denote the change of measure for all the variables
except for $x$ and $y$ when we branch $\delta=1$ ($\delta=0$).
 
\begin{itemize}
\item We can have a single Standalone 3-literal clause $(x \vee v_1 \vee v_2)$, where $v_1,v_2$ is not from $\alpha$, $\beta$ and $\delta$.
Then this case gives us a branching factor of $\tau(\Delta\mu_{\delta=1}+1+0.8823+0.7646,\Delta\mu_{\delta=0})
\leq \tau(\Delta\mu_{\delta=1}+3\times0.8823,\Delta\mu_{\delta=0})$ (Case 2.2). We have $1$ from the removal of $y$,
$0.8823$ from $x$ and $0.7646$ $(0.8823-0.1177)$ from linking $v_1$ and $v_2$. 
\item We can have a clause $(x \vee v_1 \vee v_2)$ and $(y \vee v_3 \vee v_4)$, where $v_1,...,v_4$ are not from
$\alpha$,$\beta$, $\delta$. Then this gives us $\tau(\Delta\mu_{\delta=1}+2\times(0.8823+0.7646),\Delta\mu_{\delta=0})
\leq \tau(\Delta\mu_{\delta=1}+3\times0.8823,\Delta\mu_{\delta=0})$ (Case 2.2).
\end{itemize}

Therefore, we see that the branching factor in Case 2.2 acts as an upper bound for the Standalone 3-literal case. 
Finally, we show that we can just treat all the variables in $\delta$ as having weight $1$ instead of $0.8823$.
Suppose a variable in $\delta$ appears in a 3-literal clause. Then the same 3-literal clause cannot contain another variable
from $L,R$ or $\delta$ it would be a $1$-$j$ orientation that would have already be handled earlier. So this 3-literal clause
must be a Standalone. Let $\Delta \mu_{\delta=1}$
be the change of measure when branching $\delta=1$ and $\Delta \mu_{\delta=0}$ be the change of measure when
branching $\delta=0$ for any case when the weight of variables in $\delta$ is 1. 
For $|\delta|\geq3$, the variable in $\delta$ appears in a 3-literal clause, then the branching will give us 
$\tau(\Delta \mu_{\delta=1},\Delta \mu_{\delta=0}+0.6469) < \tau(\Delta \mu_{\delta=1},\Delta \mu_{\delta=0})$. 
Note that when we have a Standalone 3-literal clause, we have a change of measure of $0.8823+0.7646$ when $\delta=0$. 
Now the difference between this and when the weight is 1 is $0.8823+0.7646-1=0.6469$. 
When $|\delta|=2$, we apply linking when we branch $\delta=1$. This gives us 
$\tau(\Delta \mu_{\delta=1}+0.8823,\Delta \mu_{\delta=0}+0.6469) < \tau(\Delta \mu_{\delta=1}+1,\Delta \mu_{\delta=0})$.
Therefore, we will always get a better branching factor because the search tree becomes more balanced.
Therefore it suffices to just deal with the case that the variables in $\delta$ have
weight 1 for our analysis below.

For the Lemma below, we will only show the Normal Case and Case 2.2 since these two cases upper bounds
the other cases as shown above. 

\begin{lemma}
The time complexity of dealing with two clauses with $k\geq2$ overlapping variables, having $i$-$j$ orientation,
$i,j\geq2$, is at most $O(1.1674^n)$ time.
\end{lemma}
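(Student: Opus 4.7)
My plan is to branch on $\varphi[\delta=1]$ versus $\varphi[\delta=0]$ and compute the branching factor $\tau(\Delta\mu_{\delta=1},\Delta\mu_{\delta=0})$ for every combination $(i,j,|\delta|)$ with $i,j\geq 2$ and $|\delta|\geq 2$, verifying in each case that it is at most $1.1674$. By the reductions in the paragraphs immediately preceding the lemma, only the Normal Case and Case~2.2 need be addressed: Case~1 and Case~2.1 have already been shown to admit strictly better bounds, and the Standalone 3-literal scenario is dominated by Case~2.2. Throughout I use the normalization that variables of $\delta$ have weight $1$, and variables of $L$ or $R$ have weight $0.8823$ precisely when they sit in a clause of size exactly $3$.

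The first step is to write down the generic contributions. Setting $\delta=1$ forces $L\cup R$ to $0$, removing the $i+j$ variables there; for $|\delta|=2$ the two literals of $\delta$ are linked, adding a further $+1$; and in Case~2.2 each witness clause $(l\vee s\vee r)$ contributes an additional $+0.8823$, since $l=r=0$ forces $s=1$. Setting $\delta=0$ removes the $|\delta|$ variables in $\delta$; if $|L|=2$ (respectively $|R|=2$) Line~5 then links the resulting 2-literal clause for an extra $+1$; and if $|L|=3$ (respectively $|R|=3$) we factor in the weight-drop bonus $3\cdot 0.1177=0.3531$, which is available in the Normal Case by the no-common-neighbour stipulation. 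The enumeration then runs through $(i,j)\in\{(2,2),(2,3),(2,\geq 4),(3,3),(3,\geq 4),(\geq 4,\geq 4)\}$ paired with $|\delta|\in\{2,\geq 3\}$.

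The binding instances should be the small ones: $(2,2)$ with $|\delta|\in\{2,3\}$ reduces to a recurrence of shape $\tau(5,4)$, and $(3,3)$ with $|\delta|=2$ reduces to shape $\tau(7,\,2+2\cdot 0.3531)$; numerical evaluation puts both at approximately $1.1674$. All larger orientations enlarge $\Delta\mu_{\delta=1}$ through $i+j$ without hurting $\Delta\mu_{\delta=0}$, so they yield strictly smaller branching factors, and Case~2.2 only adds to $\Delta\mu_{\delta=1}$ and so is always dominated by the matching Normal Case. The main obstacle I anticipate is the bookkeeping: one must invoke the $-0.1177$ linking penalty only when the linked pair actually sits in a 3-literal clause that disappears (which in the binding cases above it does not, since $|C_1|,|C_2|\geq 4$ there), and one must correctly count how many witnesses $s$ a Case~2.2 instance may carry given the Line~10 restriction that each $s$ lies in at most one 3-literal clause. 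Once those subtleties are settled, the remainder is routine evaluation of $\tau(\cdot,\cdot)$ against $1.1674$.
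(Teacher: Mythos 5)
Your overall strategy is the paper's: branch on $\delta=1$ versus $\delta=0$, enumerate $(|\delta|,i,j)$, treat only the Normal Case and Case~2.2, and identify the worst recurrence as $\tau(5,4)=1.1674$ (attained at $2$-$2$ orientation with $|\delta|=2$ and with $|\delta|=3$, and again for large $|\delta|$ as $\tau(4,5)$). The binding instances you name are right. However, the two domination claims you use to dismiss everything else are both false as stated, and relying on them would leave the proof with real gaps. First, it is not true that larger orientations leave $\Delta\mu_{\delta=0}$ untouched: moving a side from size $2$ to size $3$ replaces the $+1$ obtained from Line~5 linking by the weight-drop bonus $3\times 0.1177=0.3531$, so $\Delta\mu_{\delta=0}$ genuinely decreases (e.g.\ for $k=2$ the $2$-$3$ case is $\tau(6,\,3+3\times 0.1177)$, not something with a larger second argument than $\tau(5,4)$). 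The inequality still comes out in your favour, but only after computing it; the paper grinds through each orientation numerically rather than appealing to monotonicity.

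Second, and more seriously, Case~2.2 does \emph{not} ``only add to $\Delta\mu_{\delta=1}$.'' Each witness clause $(l\vee s\vee r)$ forces $s=1$ on the $\delta=1$ branch (gain $+0.8823$), but it also means $l$ and $r$ carry weight $0.8823$ rather than $1$, so the removal of $L\cup R$ yields $2\times 0.1177$ less than in the Normal Case — the net gain per witness is only $0.6469$. Worse, on the $\delta=0$ branch those same variables forfeit their $0.1177$ weight-drop bonuses, so $\Delta\mu_{\delta=0}$ strictly decreases relative to the Normal Case. Case~2.2 therefore trades measure from the already-smaller branch for measure on the larger branch, which can in principle worsen the branching factor; it is not a priori dominated. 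The paper handles this by computing, for every $(k,i,j)$ and every admissible number $h$ of witness clauses, recurrences of the form $\tau\bigl(h\cdot(3\times 0.8823)+2(m-h)+\cdots,\; k+2(m-h)\times 0.1177+\cdots\bigr)$ and checking each maximum (e.g.\ $1.1612$, $1.1636$, $1.1641$) against $1.1674$; some of these are uncomfortably close to the corresponding Normal Case values, so the explicit verification cannot be waved away. To complete your proof you must carry out that enumeration over $h$ rather than invoke domination.
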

\begin{proof}
Let any two clauses be given with $k\geq2$ overlapping variables and have at least 4 outside variables in a $2$-$2$ orientation.
We will show the Normal Case first, followed by Case 2.2 (only for outsides variables $i\leq3$ or $j \leq3$). For Case 2.2,
and the appearance of each 3-literal clause, note that when branching $\delta=1$, we can remove all the variables in the 3-literal
clause, giving us $3\times0.8823$ per 3-literal clause that appears in this manner. Let $h$ denote the number of further 3-literal
clauses for Case 2.2 encountered below. In addition, for Case 2.2 having odd number of outside variables,
we treat the variable not in any 3-literal clause as having weight 1, acting as an upper bound to our cases.

For $k=2$, and we have 4 outside variables in a $2$-$2$ orientation. 
When $\delta=1$, we remove all 4 outside variables and another 1 from linking the variables in $\delta$. This gives us a change of measure 5.
When $\delta=0$, we remove 2 variables in $\delta$ and another 2 from linking the variables in $L$ and $R$. This gives us a 
change of measure of 4. Therefore, we have a branching factor of $\tau(5,4)=1.1674$. 
For Case 2.2, we can have at most two 3-literal clauses here. This gives us 
$\tau(h\times(3\times0.8823)+2\times(2-h)+1,2+2\times0.8823)$, 
$h\in \{1,2\}$, which is at max branching factor of
$1.1612$, when $h=1$. This completes the case for 4 outside variables.

Suppose we have 5 outside variables in $2$-$3$ orientation. Branching $\delta=1$ will remove all outside variables, 
and 1 of the linked variable in $\delta$. This gives us a change of measure of 6. On the other hand, branching $\delta=0$
will allow us to remove all the variables in $\delta$, link the 2 variables in $L$, and factor in the change of measure
for the remainining variables in $R$. This gives us a change of measure of $\tau(6,3+3\times0.1177)=1.1648$.
For Case 2.2, we have at most two 3-literal clauses appearing in both $L$ and $R$. 
Then we have $\tau(h\times(3\times0.8823)+2\times(2-h)+2,2+0.8823+0.1177)$, $h\in\{1,2\}$,
which is at max branching factor of $1.1636$ when $h=1$. This completes the case for 5 outside variables.

Suppose we have 6 outside variables in a $3$-$3$ orientation. Branching $\delta=1$ will remove all 6 outside variables
in $L$ and $R$, and also remove an additional variable by linking the two variables in $\delta$. This gives us a change 
of measure of $7$. On the other hand, when $\delta=0$, we remove all the variables in $\delta$ and also
factor in the change of measure for the variables in $L$ and $R$, a total of $2+6\times0.1177$ for
this branch. This gives us a branching factor of $\tau(7,2+6\times0.1177)=1.1664$. 
When Case 2.2 applies, then we can have at most three 3-literals clauses appearing. This gives us a
branching factor of $\tau(h\times(3\times0.8823)+2\times(3-h)+1,2+2\times(3-h)\times0.1177)$, $h\in \{1,2,3\}$, 
with max branching of $1.1641$ when $h=1$. This completes the case for 6 outside variables.

Suppose we have 7 outside variables in a $3$-$4$ orientation. Branching $\delta=1$ will allow
us to remove all 7 outside variables, and 1 variable from $\delta$ via linking. This gives us a change of measure of 8.
On the other hand, when $\delta=0$, we can factor in a change of measure of $3\times0.1177$ from the variables.
This gives us a branching factor of $\tau(8,2+3\times0.1177)=1.1630$.  
For Case 2.2, there are at most three 3-literal clauses between
$L$ and $R$. This gives us a branching factor of 
$\tau(h\times(3\times0.8823)+2\times(3-h)+1+1,2+(3-h)\times0.1177)$,
which is at max of $1.1585$ when $h=1$. This completes the case for
7 outside variables.

Let $p\geq8$ be the number of outside variables. Branching $\delta=1$ allows us to remove all $p$ outside variables,
and an additional variable from linking in $\delta$, which has a change of measure of 9. 
For the $\delta=0$ branch, we remove two variables. This gives us a branching factor of 
$\tau(p+1,2)\leq \tau(9,2)=1.1619$. This completes the case for $k=2$ overlapping variables. 

Now we deal with $k=3$ overlapping variables. If there are 4 outside variables in a $2$-$2$ orientation, then
branching $\delta=1$ will allow us to remove all 4 outside variables, which is a change of measure of $4$. 
On the other hand, branching $\delta=0$ will allow us to
remove all the variables in $\delta$, as well as link the two variables in $L$ and $R$, removing
a total of 5 variables. This gives $\tau(4,5)=1.1674$. When we have Case 2.2, then we have at most
two 3-literal clauses appearing. This gives us a branching factor of at most
$\tau(h\times(3\times0.8823)+2\times(2-h),3+2\times0.8823)$, $h \in \{1,2\}$, which has a max
branching factor of $1.1588$ when $h=1$. This completes the case for 4 outside variables in a $2$-$2$ 
orientation.

For the case of 5 outside variables, they are in a $2$-$3$ orientation. Branching $\delta=1$
will allow us to remove all 5 outside variables. On the other hand, branching $\delta=0$ will allow us to
remove all the variables in $\delta$, an additional variable from 
linking the two variables in $L$, as well as factoring in the change of measure from $R$ of
$4+3\times0.1177$. This gives us a branching factor of
$\tau(5,4+3\times0.1177)=1.1601$. For Case 2.2, we can have at most two 3-literal clauses occurring. 
Then we have a branching factor of $\tau(h\times(3\times0.8823)+2\times(2-h)+1,3+0.8823+0.1177)$, $h \in \{1,2\}$, 
which is at max branching factor of $1.1563$ when $h=1$. This completes the case for 5 outside variables.

For the case of 6 outside variables, they are in a $3$-$3$ orientation. When branching $\delta=1$,
we can remove all 6 outside variables. When branching $\delta=0$, we remove all 3 variables in $\delta$,
and we can factor in the change of measure for these of $0.1177$ for these 6 variables. This gives a
branching factor of $\tau(6,3+6\times0.1177)=1.1569$. In Case 2.2, we can have at most three 3-literal
appearing in $L$ and $R$. Then the branching factor for this case would be 
$\tau(h\times(3\times0.8823)+2\times(3-h),3+2\times(3-h)\times0.1177)$,  $h\in \{1,2,3\}$,
which is at max branching factor of $1.1526$ when $h=1$. This completes the case for 6 outside variables. 

Let $p\geq7$ be the number of outside variables. Then branching $\delta=1$ will allow us to remove at
least 7 variables, and when $\delta=0$, we remove all the variables in $\delta$. This gives us a branching
factor of $\tau(p,3) \leq \tau(7,3)=1.1586$. For Case 2.2, we can have at most $h\leq \floor{\frac{7}{2}}$ 
3-literals clauses. Then our branching factor is 
$\tau(h\times(3\times0.8823)+2\times(\floor{\frac{7}{2}}-h)+1,3)$, which is at max of $1.1503$ when $h=1$.
This completes the case for $k=3$ overlapping variables.

Now, we deal with the case of $k=4$ overlapping variables. When we have 4 outside variables
in a $2$-$2$ orientation, then branching $\delta=1$ will allow us to remove all 4 outside variables,
giving us a change of measure of 4. On the other hand, when $\delta=0$, we remove all 
the variables in $\delta$, and link the two variables in $L$ and $R$. This gives us a change of measure
of 6. Therefore, we have a branching factor of $\tau(4,6)=1.1510$. For Case 2.2,
we can have at most two 3-literal clauses. Then our branching factor is 
$\tau(h\times(3\times0.8823)+2\times(2-h),4+2\times0.8823)$, $h \in \{1,2\}$, which is at max 
$1.1431$ when $h=1$. This completes the case for 4 outside variables. 

If there are $p\geq5$ outside variables,
then branching $\delta=1$ will allow us to remove at least 5 variables. On the other hand,
branching $\delta=0$ will remove all the variables in $\delta$. This gives a branching factor of
$\tau(p,4) \leq \tau(5,4)=1.1674$. For Case 2.2, we can have at most $h\leq\floor{\frac{5}{2}}$
number of 3-literal clauses. 
The branching factor is $\tau(h\times(3\times0.8823)+2\times(\floor{\frac{5}{2}}-h)+1,4)$, 
which is at max of $1.1563$ when $h=1$. This completes the case for 5 outside variables.

Finally for $k\geq5$ overlapping variables and $p\geq4$ outside variables, branching $\delta=1$ will remove at least 
4 variables, while branching $\delta=0$ will remove at least 5 variables. This gives us 
$\tau(p,k) \leq \tau(4,5) = 1.1674$. For Case 2.2, there can be at most $h\leq \floor{\frac{4}{2}}$
number of 3-literal clauses. Then our branching factor is at most 
$\tau(h\times(3\times0.8823)+2\times(\floor{\frac{p}{2}})-h),k) \leq 
\tau(h\times(3\times0.8823)+2\times(\floor{\frac{4}{2}}-h),5)$, which has max franching factor of $1.1547$
when $h=1$.
 
This completes the case for $k\geq2$ overlapping variables
and the max branching factor while executing this line of the algorithm is $1.1674$.
\end{proof}

\subsection{Line 13 of the algorithm}
Now, we deal with Line 13 of the algorithm, to branch off heavy variables in the formula.
After Line 12 of the algorithm, given any two clauses $C_1$ and $C_2$, there can only be at most
only 1 variable appearing in them. Cases 1 and 2 in the previous section will also apply here.
In Section 4.3, we paid special attention to $L$ and $R$ when $|L|=3$ or $|R|=3$. Here,
we pay special attention to $x$ being in 4-literal clauses, because after branching $x=0$, it will
drop to a 3-literal clause. Since we have dealt with $(3,3,3)$ case earlier, here, we'll deal
with the remaining cases; cases from $(3,3,\geq4)$ to $(\geq5,\geq5,\geq5)$. 

For Case 1 (common neighbour), we will only show the analysis
 for the $(4,4,4)$ case because it is only this case where we can factor in
a change of $9\times0.1177 > 0.8823$, which is better 
than removing the common neighbour. For Case 2, there are some changes as well. 
Here, we are dealing with 3 clauses instead of 2 in the previous section. Therefore, there will be 
more permutation of 3-literal clauses to consider. Recall previously that we dealt with a case
where $s$ appears in two 3-literal clauses in Case 2.1 of Section 4.3. 
Here, we deal with something similar.

Suppose there are clauses $C_1 = (l_1 \vee l_2 \vee \delta \vee x)$ , $|C_1|\geq3$, 
$C_2 = (r_1 \vee r_2 \vee \alpha \vee x)$, $|C_2|\geq4$, for some subclause $\delta$ and $\alpha$,
$(s \vee l_1 \vee r_1)$ and  $(s \vee l_2 \vee r_2)$, $s$ not appearing in the clauses $C_1$ and $C_2$.
 We give an upper bound for such cases. When $s=1$,
we remove 5 variables, with a change of measure of $5\times0.8823$. When $s=0$, we remove $s$,
and link $l_1 = \neg r_1$, and $l_2 = \neg r_2$. After which, the remaining clauses become
$(\neg r_1 \vee \neg r_2 \vee \delta \vee x)$ and $(r_1 \vee r_2 \vee \alpha \vee x)$.
Then, we must have $r_1 = \neg r_2$, and we can remove one of the linked variable, an additional variable
from $C_2$ and $x$. This gives us a change of measure of $6\times0.8823-2\times0.1177$. Note that we require that
one of the two clauses to be at least length 4. When both clauses are length 3, by default, we have
already treat all variables to have weight $0.8823$, hence we ignore such cases. This gives us a branching
factor of at most $\tau(5\times0.8823,6\times0.8823-2\times0.1177)=1.1580$. We eliminate cases like this and
list other permutations in Case 2 here. 

Let $s$ be a variable not appearing in the clauses that we are discussing about. We redefine Case 2.1 and Case 2.2,
while keeping the Normal Case as before. Recall that Case 2 deals with the fact that some of the neighbours of $x$
have weight $0.8823$ instead.

Case 2.1 : We have clauses $C_1 = (a_1 \vee a_2 \vee ... \vee x)$ ,
$C_2 = (b_1 \vee b_2 \vee ... \vee x)$, 
$C_3 = (c_1 \vee c_2 \vee ... \vee x)$,
$(s \vee a_1 \vee b_1)$ and  $(s \vee b_2 \vee c_1)$. Note that some clause $C_2$
has two variables as neighbours of $s$. For the proof below, we will use this
(a clause having two variables in it as neighbours of s) notation to denote the
worst case. Note that any of the variables in Case 2.1 can be on further 3-literal
clauses of any permutation. 

Case 2.2 : No such $s$ occurs where $s$ appears in two 3-literal clauses, and
the neighbours of $s$ are the neighbours of $x$. Here, we consider two neighbours
of $x$ as being neighbours of a new variable not appearing in $C_1$, $C_2$ and $C_3$, or
Standalone 3-literal clauses here whichever gives the worst case.

For example, we have $(x \vee v_1 \vee v_2 \vee v_3)$, $(x \vee v_4 \vee v_5 \vee v_6)$ and $(x \vee v_7 \vee v_8 \vee v_9)$.
Here, we consider 3-literal clauses such as $(s \vee v_1 \vee v_4)$. Here $s$ appears
only once for this case. Similar to Case 2.2 of the previous section.

In Case 2, $s$ cannot appear in the third 3-literal clause, else Line 10 of the algorithm would have already
handled it. Therefore, the new variable $s$ can appear in at most two 3-literal clauses. Our cases here are complete.

\begin{lemma}
The time complexity of branching heavy variables is $O(1.1668^n)$.
\end{lemma}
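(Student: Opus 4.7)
The plan is to mirror the three-regime analysis of the previous section, but now specialised to a heavy variable $x \in (l_1,\ldots,l_k)$ with $k \geq 3$ and each $l_i \geq 3$, excluding $(3,3,3)$ and $(3,3,\neg 3)$ which are already dispatched by Line~10. For each configuration I branch on $x$: in the $x=1$ branch, every other literal in every clause containing $x$ is forced to $0$, removing $x$ together with all outside neighbours of $x$ (using the convention that variables appearing in a 3-literal clause count for $0.8823$ and all others for $1$); in the $x=0$ branch, each clause shortens by one, so a 4-literal clause becomes a 3-literal clause (yielding $+0.1177$ per surviving variable) and a 3-literal clause collapses to a 2-literal clause that then links (yielding $0.8823$ for the removed linked variable but $-0.1177$ for the surviving one). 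This produces a branching factor $\tau(\Delta\mu_{x=1},\Delta\mu_{x=0})$ in the Normal Case for each configuration.

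Two preparatory reductions keep the enumeration manageable. First, by the $x \leftrightarrow \neg x$ symmetry, configurations differing only in the orientation of negation can be handled by a single representative. Second, the bound $\tau(5 \times 0.8823,\, 6 \times 0.8823 - 2 \times 0.1177) \approx 1.158$ already established in the preamble to this lemma dominates every Case~2.1 instance, so Case~2.1 can be ignored in the configuration enumeration; and Case~1 only has to be treated separately for the $(4,4,4)$ configuration, because outside $(4,4,4)$ the extra $0.8823$ that the common-neighbour simplification contributes to the $x=0$ branch always improves over the Normal analysis, whereas in $(4,4,4)$ the Normal Case already extracts $9 \times 0.1177 > 0.8823$ from weight drops and must therefore be rechecked in the common-neighbour regime.

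For Case~2.2 in each configuration I parametrise by the number $h$ of further 3-literal clauses $(s \vee a \vee b)$ whose two non-$s$ variables are neighbours of $x$ lying in distinct clauses of $x$; Line~10 forces $h$ to be bounded by half the number of admissible clause-pairs of $x$. Each such 3-literal clause contributes $3 \times 0.8823$ to $\Delta\mu_{x=1}$ (removing the two neighbours of $x$ and also $s$) instead of the $2 \times 0.8823$ it would give in the Normal Case, while costing the $0.1177$ weight-drop bonus for the two affected variables in the $x=0$ branch. In every configuration the resulting $\tau(\Delta\mu_{x=1}(h),\Delta\mu_{x=0}(h))$ is maximised at small $h$ and stays below the corresponding Normal value. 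Collecting the worst of Normal, Case~1 restricted to $(4,4,4)$, and Case~2.2 over all configurations from $(3,3,\geq 4)$ up to $(\geq 5,\geq 5,\geq 5)$ then yields the advertised bound of $1.1668$. The main obstacle is the $(4,4,4)$ configuration itself: it is the only place in the enumeration where the Normal weight-drop mechanism and the Case~1 common-neighbour argument cannot simply be combined, and one has to verify directly that branching on $x$ in $(4,4,4)$ with a common neighbour present, and separately with a Case~2.2 clause pattern of maximum size, still produces a $\tau$ value at most $1.1668$.
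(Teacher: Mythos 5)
There is a genuine gap: you discard Case~2.1 on the grounds that the preamble bound $\tau(5\times0.8823,\,6\times0.8823-2\times0.1177)=1.1580$ dominates it, but that bound only eliminates the sub-configuration in which both 3-literal clauses through $s$ draw all four of their non-$s$ variables from the \emph{same two} clauses of $x$ (and even then only when one of those clauses has length at least $4$). The surviving Case~2.1 --- where the two 3-literal clauses $(s\vee a_1\vee b_1)$ and $(s\vee b_2\vee c_1)$ span three clauses of $x$, with one clause contributing two neighbours of $s$ --- is not covered by that bound and must be analysed per configuration. This is not a cosmetic omission: the paper's worst case for the entire lemma, $\tau(9\times0.8823+1,\,2\times0.8823+0.1177)=1.1668$ in the $(4,4,4)$ configuration, arises exactly from this Case~2.1, and it is obtained by branching on the neighbour literal $b_1$ (forcing $s=a_1=b_2=b_3=x=0$ and then $c_1=1$) rather than on $x$ or $s$. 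Your enumeration, which branches only on $x$ and drops Case~2.1, would therefore both miss the configuration that determines the stated constant and leave the case analysis incomplete.

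A secondary inaccuracy: your claim that in Case~2.2 the branching factor is always maximised at small $h$ and stays below the Normal value fails in several configurations. For instance, in $(3,4,4)$ the Normal Case gives $1.1551$ while Case~2.2 peaks at $h=3$ with $1.1571$, and in $(4,4,\geq5)$ the Normal Case gives $1.1509$ while Case~2.2 peaks at $h=3$ with $1.1637$. So Case~2.2 can exceed the Normal Case and its maximum can occur at the largest admissible $h$; the per-configuration computation cannot be waved away by a monotonicity claim. The overall architecture you propose (Normal / Case~1 restricted to $(4,4,4)$ / Case~2.2, enumerated from $(3,3,\geq4)$ to $(\geq5,\geq5,\geq5)$) matches the paper, but without a correct treatment of the residual Case~2.1 the proof does not establish the $O(1.1668^n)$ bound.
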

\begin{proof}
Let $x$ be a heavy variable. Given $(l_1,l_2,l_3)$, then there are $|l_1| + |l_2| + |l_3| - 2$ unique
variables in these clauses. From the previous line of the algorithm, we know that any clause $C_1$ and $C_2$
must be that $|C_1 \cap C_2|\leq1$. Let $h$ denote the number of
3-literal clauses as shown in Case 2.2 above. We will give the Normal case, Case 1 (only for $(4,4,4)$), Case
2.1 and Case 2.2.  For Case 2.1, we will treat all variables as having weight $0.8823$ due to all the 
possibilities and permutations of 3-literal clauses that can occur here.
In addition, we handle the cases in the following order: $(3,3,\geq4)$, then $(3,\geq4,\geq4)$ etc.

\begin{itemize}
\item  $(3,3,\geq4)$. 
We'll first start with $(3,3,4)$. Branching $x=1$ will allow us to remove all the variables in this case,
with a change in measure of $5\times0.8823$ + $3$. When $x=0$, we will have a change in
measure of $3\times0.8823-2\times0.1177$, and when the 4-literal clause drops to a 3-literal clause,
another $3\times0.1177$. This gives $\tau(5\times0.8823+3,3\times0.8823+0.1177)=1.1591$. 

If Case 2.1 occurs, then we branch $x=1$ and $x=0$. When $x=1$, all the literals in the 3 clauses are
assigned $0$. This means that we can also remove $s$. We have a change of measure of 
$9\times0.8823$ here. On the other hand, when $x=0$, we link up the variables in the 3-literal clauses
that $x$ is in, giving us $3\times0.8823-2\times0.1177$. This gives us a branching factor of 
$\tau(9\times0.8823,3\times0.8823-2\times0.1177)=1.1620$. 

If Case 2.2 occurs, then we can have at most three 3-literal clauses, with each literal (apart from $x$) in the 4-literal clause 
appearing in a seperate 3-literal clause (appearing as Standalone 3-literal clause as being the worst).
We branch $x=1$ and $x=0$. Then the branching factor is given as 
$\tau(5\times0.8823+h\times(0.8823+0.7646)+(3-h),3\times0.8823+(3-h)\times0.1177-2\times0.1177)$, $h\in\{1,2,3\}$,
which is at max of $1.1540$ when $h=1$. This completes the case for $(3,3,4)$.

 Next, we deal with $(3,3,\geq5)$. For such a case, when $x=1$, we remove all variables, which gives us a change
of measure of $5\times0.8823+4$. When $x=0$, we remove $x$ and link up the two variables in the 3-literal clauses.
This gives us a change of $3\times0.8823-2\times0.1177$. The branching factor for this case would be 
$\tau(5\times0.8823+4,3\times0.8823-2\times0.1177)=1.1562$. If Case 2.1 or 2.2 applies here, 
then we give an upper bound to these cases by treating all variables as having weight $0.8823$.
When $x=1$, we remove all 9 variables, this gives us $9\times0.8823$. 
On the other hand, when $x=0$, we have $3\times0.8823-2\times0.1177$. This gives us at most
$\tau(9\times0.8823,3\times0.8823-2\times0.1177)=1.1620$. This completes the case for $(3,3,\geq5)$ and hence
$(3,3,\geq4)$. 

\item $(3,\geq4,\geq4)$. 
We start with $(3,4,4)$, then $(3,4,\geq5)$ and then $(3,\geq5,\geq5)$. 
Branching $x=1$ will allow us to remove all the variables, this
gives us a change of measure of $6+3\times0.8823$. On the other hand, branching $x=0$,
we can factor in a change of measure of $2\times0.8823-0.1177 + 6\times0.1177$. This gives us a branching
factor of $\tau(6+3\times0.8823,2\times0.8823+5\times0.1177)=1.1551$. 

For Case 2.1, the worst case happens when 
we have two variables in any of the 4-literal clauses as neighbours of $s$. Branching $s=1$ will allow us to remove
7 variables, where one of which is via linking of a variable in a 3-literal clause, giving us $7\times0.8823-0.1177$. When $s=0$, we remove
$x$, $s$ and 2 variables via linking in the 3-literal clause, giving us $4\times0.8823-2\times0.1177$. This gives  
$\tau(7\times0.8823-0.1177,4\times0.8823-2\times0.1177)=1.1653$. 

For Case 2.2, we can have at most three 3-literal
clauses appearing across the two 4-literal clauses. Then branching $x=1$ and $x=0$ gives us 
$\tau(3\times0.8823+h\times(3\times0.8823)+2\times(3-h),2\times0.8823-0.1177+(3-h)\times2\times0.1177)$, $h \in \{1,2,3\}$, 
which is at max of $1.1571$ when $h=3$. This completes the case for $(3,4,4)$.

For $(3,4,\geq5)$, branching $x=1$ will allow us to remove all variables, representing a change in measure of $3\times0.8823+7$. 
On the other hand, branching $x=0$ will allow us to remove $x$, link a variable in the 3-literal clause
and factor in the change in measure for the 4-literal clauses. This gives us 
$\tau(3\times0.8823+7,2\times0.8823+2\times0.1177)=1.1547$. 

For Case 2.1 and Case 2.2, we can find a variable $s$
that does not appear in any of the clauses. We give an upper bound for this case by treating all variables as having weight $0.8823$.
When $x=1$, we remove all 10 variables and $s$. This
gives us $11\times0.8823$. When $x=0$, we remove $x$ and link up the other variable in the 3-literal clause, giving us
$2\times0.8823-0.1177$. This gives us a branching factor of at most $\tau(11\times0.8823,2\times0.8823-0.1177)=1.1666$.
This completes the case for $(3,4,\geq5)$. 

Finally, for the case of $(3,\geq5,\geq5)$, we give an upper bound for this case by 
treating all the variables as having weight $0.8823$, to deal with the Normal Case, Case 2.1
and 2.2 at the same time. Branching $x=1$ gives us a change of measure of $11\times0.8823$.
When $x=0$, this gives us $2\times0.8823-0.1177$. Putting them together, we have a branching factor of at
most $\tau(11\times0.8823,2\times0.8823-0.1177)=1.1666$ for this case.
This completes the case for $(3,\geq5,\geq5)$ and hence $(3,\geq4,\geq4)$. 

\item $(4,4,4)$.
When $x=1$, we remove all variables. This gives us a change of measure of $10$.
On the other hand, when $x=0$, we have a change of measure of $1+9\times0.1177$.
This gives us a branching factor of $\tau(10,1+9\times0.1177)=1.1492$. If Case 1 occurs,
then we have at most $\tau(10,1+0.8823)=1.1548$. 

When Case 2.1 occurs, then one of the 4-literal clause must have 2 variables in it that are neighbours to $s$.
Suppose we have $(x \vee a_1 \vee a_2 \vee a_3)$, $(x \vee b_1 \vee b_2 \vee b_3)$, 
$(x \vee c_1 \vee c_2 \vee c_3)$, $(s \vee a_1 \vee b_1)$ and $(s \vee b_2 \vee c_1)$. 
Then we branch $b_1=1$ and $b_1=0$. When $b_1=1$, then $s=a_1=b_2=b_3=x=0$.
Since $s=b_2=0$, then $c_1=1$. Therefore, we must have $c_2=c_3=0$ and we can link up
$a_2 = \neg a_3$. Now, $x$ must have weight 1, else earlier cases would have handled it.
This gives a change of measure of $9\times0.8823+1$. On the other hand, when $b_1=0$,
we link up $a_1 = \neg s$ (no increase in measure here since $s$ is still in another 3-literal clause), 
$x$ will drop in weight, giving us a change of measure of $2\times0.8823+0.1177$.
This gives a branching factor of at most $\tau(9\times0.8823+1,2\times0.8823+0.1177)=1.1668$. 

When Case 2.2 arises, then the worst case happens when we have three 3-literal clauses
appearing between two of the 4-literal clauses. We branch $x=1$ and $x=0$.
When $x=1$, we remove all the variables in the clauses, along with 3 new variables not in these
3 clauses containing $x$. This gives us $9\times0.8823+4$. On the other hand, when 
$x=0$, one of the 4-literal clause drops to a 3-literal clause, giving us 
$1+3\times0.1177$. This gives us a branching factor of 
$\tau(9\times0.8823+4,1+3\times0.1177)=1.1577$.
This completes the case for $(4,4,4)$. 

\item $(4,4,\geq5)$. 
When $x=1$, we remove all 11 variables. When $x=0$, we remove $x$ and factor
in the change of measure from the 4-literal clauses, giving us $1+6\times0.1177$.
This gives us a branching factor of $\tau(11,1+6\times0.1177)=1.1509$. 

If Case 2.1 occurs, then two variables from a 4-literal or 5-literal clause is a neighbour to $s$, 
Suppose the 2 variables appear in the 4-literal clause. We can then choose a literal $a$ in this 4-literal clause 
to branch such that we have a literal in the 5-literal clause being assigned 1 (same technique as above) when $a=1$.
This allows us to remove 10 variables of weight $0.8823$ and one of weight 1 ($x$), giving us 
$10\times0.8823+1$. On the other hand, when $a=0$, we link up a variable with $s$ in the 
3-literal clause, and then factor in the drop of weights for $x$, giving us 
$2\times0.8823+0.1177$. This gives us $\tau(10\times0.8823+1,2\times0.8823+0.1177)=1.1566$. 
If the two variables appear in the 5-literal clause, then again, we apply the same technique. 
This gives us $\tau(10\times0.8823+1,2\times0.8823)=1.1608$.

If Case 2.2 applies, then there 
are at most three 3-literal clauses between the two 4-literal clauses. 
Then our branching factor is given as 
$\tau(h\times(3\times0.8823)+5+2\times(3-h),1+2\times(3-h)\times0.1177)$, $h\in \{1,2,3\}$, 
which is at max of $1.1637$ when $h=3$. This completes the case for $(4,4,\geq5)$. 

\item $(4,\geq5,\geq5)$. 
When $x=1$, we remove all 12 variables. When $x=0$, we remove $x$ and factor in the 
change of measure of $1+3\times0.1177$. Therefore, we have
$\tau(12,1+3\times0.1177)=1.1551$. 

When Case 2.1 occurs, then follow the same
technique as given in the previous case to get an upper 
bound of $\tau(10\times0.8823+1,2\times0.8823)=1.1608$.

\item $(\geq5,\geq5,\geq5)$. 
When $x=1$, we remove 13 variables and when $x=0$, we remove only $x$. 
This gives us $\tau(13,1)=1.1632$. 

If Case 2.1 occurs, 
follow the same technique as given in $(4,4,\geq5)$ to get an upper bound of
$\tau(10\times0.8823+1,2\times0.8823)=1.1608$.

For Case 2.2,
then worst case occurs when every variable in $(\geq5,\geq5,\geq5)$ has weight $1$, which gives $1.1632$ (Normal Case). 
This is because when $x=0$, we can only remove $x$ and not factor in any other change in measure. On the other hand,
when any of the variables have weight $0.8823$, this means we can remove additional variables when $x=1$.
Therefore, giving us a lower branching factor.
This completes the case for $(\geq5,\geq5,\geq5)$.
\end{itemize}

Hence, Line 14 of the algorithm runs in $O(1.1668^n)$ time.
\end{proof}

Therefore, putting all the lemmas together, we have the following result :
\begin{theorem}
The algorithm runs in $O(1.1674^n)$ time.
\end{theorem}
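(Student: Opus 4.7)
The plan is to assemble the theorem from the three branching lemmas already established, invoking the standard fact recorded in Section~2.1 that the running time of a DPLL algorithm on measure $\mu$ is bounded by $O^{*}(c^{\mu})$, where $c$ is the maximum branching factor over all branching rules, and then converting the measure-based bound into an $n$-based bound.

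First I would recall the structure of the algorithm: Lines~1--9 and Line~11 are simplification rules, which by assumption each take polynomial time and do not increase the measure, so they contribute only a polynomial factor. Lines~10, 12 and 13 are the only branching rules, and Line~14 runs in polynomial time by the result of \cite{MSV81}. Hence the number of leaves in the search tree is controlled entirely by the three branching rules, and the worst-case time is $O^{*}(\max\{c_{10}, c_{12}, c_{13}\}^{\mu})$ where $c_i$ is the largest branching factor produced by Line~$i$.

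Next I would plug in the bounds from the three lemmas proved in Sections~4.1, 4.3 and 4.4: Line~10 contributes branching factor at most $1.1664$, Line~12 (covering both the $1$-$j$ orientation and the $i,j\geq 2$ orientations) contributes at most $1.1674$, and Line~13 contributes at most $1.1668$. Taking the maximum gives $c = 1.1674$, achieved in the $k=2$, $2$-$2$ orientation sub-case and also at several places in the $k\geq 3$ analysis. Therefore the total number of leaves is $O^{*}(1.1674^{\mu})$.

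Finally I would convert from $\mu$ to $n$. By the definition of the nonstandard measure in Section~2.3, each variable $x_i$ is assigned a weight $w_i \in \{0.8823, 1\}$ and $\mu = \sum_{i=1}^{n} w_i \leq n$. Hence $1.1674^{\mu} \leq 1.1674^{n}$, and the polynomial overhead is absorbed into the $O(\cdot)$ notation, giving the stated $O(1.1674^{n})$ bound. The main (already-done) obstacle was proving the Line~12 lemma, since that is where the tight $1.1674$ factor arises; the theorem itself is simply the bookkeeping step of taking the maximum over all branching rules and undoing the measure, which is essentially immediate once the three lemmas are in hand.
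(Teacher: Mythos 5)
Your proposal is correct and matches the paper's own (implicit) argument exactly: the paper simply combines the lemmas for Lines~10, 12 and 13, takes the maximum branching factor $1.1674$ arising from the Line~12 analysis, and uses $\mu \leq n$ to convert the $O^{*}(1.1674^{\mu})$ bound into $O(1.1674^{n})$. No further comment is needed.
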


In summary, we proposed a DPLL style algorithm to solve the XSAT problem in $O(1.1674^n)$. Prior to this work,
the current state of the algorithm is another DPLL style algorithm which runs in $O(1.1730^n)$. The novelty of our algorithm lies
on the design of a nonstandard measure to help us to tighten our analysis further. However, this has led to some additional
cases that we have to analyse. Perhaps a question for interested readers would be : Is it possible to design a simple nonmeasure to 
either improve the worst case bound further ? Or to cut down the number of cases that we need to analyse.


\begin{thebibliography}{99}

\bibitem{Sch78} T.J. Schaefer.
\newblock {\em The complexity of satisfiability problems.}
\newblock Proc. STOC 1978. ACM (1978), p. 216-226.

\bibitem{Cook71} S. Cook.
\newblock {\em The Complexity of Theorem Proving Procedures.}
\newblock Proc. 3rd Annual ACM Symposium on Theory of Computing (STOC). pp. 151–158.

\bibitem{Silva08} J. Marques-Silva.
\newblock {\em Practical applications of Boolean Satisfiability.}
\newblock 2008 9th International Workshop on Discrete Event Systems. IEEE, 2008, pp. 74–80.

\bibitem{SS81} R. Schroeppel,A. Shamir.
\newblock {\em A $T=O(2^{n/2})$, $S=O(2^{n/4})$ algorithm for certain NP-complete problems.}
\newblock SIAM J. Comput. 10(3) (1981) 456-464.

\bibitem{MSV81} B. Monien, E.Speckenmeyer, O. Vornberger.
\newblock {\em Upper bounds for covering problems.}
\newblock  Methods Oper. Res. 43 (1981) 419-431.

\bibitem{BMS05} J.M. Byskov, B.A. Madsen, B. Skjernaa.
\newblock {\em New algorithms for exact satisfiability} 
\newblock Theoretical Computer Science 332, no. 1-3 (2005): 515-541.

\bibitem{D06} V. Dahll\"of .
\newblock {\em Exact Algorithms for Exact Satisfiability Problems.}
\newblock Link\"oping Studies in Science and Technology, PhD Dissertation no 1013, 2006.

\bibitem{FGK09} F.V. Fomin, F. Grandoni and D. Kratsch 
\newblock {\em A measure and conquer approach for the analysis of exact algorithms.}
\newblock Journal of the ACM (JACM) 56, no. 5 (2009): 25.

\bibitem{FK10} F.V.\ Fomin and D. Kratsch.
\newblock {\em Exact Exponential Algorithms.}
\newblock Texts in Theoretical Computer Science. An
EATCS Series. Springer, Berlin, Heidelberg, 2010.

\bibitem{Kul99} O. Kullmann.
\newblock {\em New methods for 3-SAT decision and worst-case analysis.}
\newblock Theoretical Computer Science, 223(1-2):1-72, 1999. 

\bibitem{DPLL60} Davis, Martin; Logemann, George; Loveland, Donald
\newblock {\em  A Machine Program for Theorem Proving}
\newblock  Communications of the ACM. 5 (7): 394–397.

\end{thebibliography}
\end{document}